\newlist{steps}{enumerate}{1}
\setlist[steps, 1]{label = Step \arabic*:}
\newtheorem{thm}{Theorem}
\newtheorem{lemma}{Lemma}
\newcommand{\bmtx}{\begin{bmatrix}}
\newcommand{\emtx}{\end{bmatrix}}
\newcommand{\bsmtx}{\left[ \begin{smallmatrix}} 
\newcommand{\esmtx}{\end{smallmatrix} \right]}
\newcommand{\bmatarray}[1]{\left[\begin{array}{#1}}
\newcommand{\ematarray}{\end{array}\right]}
\newcommand{\field}[1]{\mathbb{#1}}
\newcommand{\R}{\field{R}}
\newcommand{\akr}[1]{{\color{black}{#1}}}
\newcommand{\ak}[1]{{\color{black}{#1}}}
\newcommand{\tm}[1]{{\color{black}{#1}}}
\begin{document}

\title{Estimating Regions of Attraction for\\ Transitional Flows using Quadratic Constraints}

\author{\IEEEauthorblockN{Aniketh Kalur,$^1$  Talha Mushtaq,$^1$ Peter Seiler,$^2$ and Maziar S. Hemati$^1$\\ }
\IEEEauthorblockA{$^1$Aerospace Engineering and Mechanics,
University of Minnesota,
Minneapolis, MN 55455, USA\\
$^2$Electrical Engineering and Computer Science,
 University of Michigan, Ann Arbor, MI 48109, USA 
}}


\maketitle  
\thispagestyle{empty} 
\begin{abstract}
  This letter describes a method for estimating regions of attraction and bounds on permissible perturbation amplitudes in nonlinear fluids systems.
  The proposed approach exploits quadratic constraints between the inputs and outputs of the
  nonlinearity on elliptical sets.
  This approach reduces conservatism and improves estimates for regions of attraction and bounds on permissible perturbation amplitudes over related methods that employ quadratic constraints on spherical sets.
  We present and investigate two algorithms for performing the analysis: an iterative method that refines the analysis by solving a sequence of semi-definite programs, and another based on solving a generalized eigenvalue problem with lower computational complexity, but at the cost of some precision in the final solution.
  The proposed algorithms are demonstrated on low-order mechanistic models of transitional flows.
  We further compare accuracy and computational complexity with analysis based on sum-of-squares optimization and direct-adjoint looping methods.
\end{abstract}

\begin{IEEEkeywords}
Region of attraction,  transitional fluid flows, quadratic constraints.
\end{IEEEkeywords}
\section{Introduction}
\label{sec:intro}
\IEEEPARstart{E}{nvironmental} disturbances can cause fluid flows to transition from a low-skin-friction laminar state to
a high-skin-friction turbulent state
when the Reynolds number~($Re$) is sufficiently large.
Yet, precisely predicting the onset of transition is notoriously difficult, even in the simplest of geometries~\cite{Schmid2001,barkleyJFM2016,Kerswell_2018}.
\akr{An ability to reliably estimate if and when transition will arise 
 is directly related to the problem of identifying the region of attraction~(ROA) of the system.} 
To this end, in this work we investigate systems-theoretic analysis methods for estimating the ROA of a laminar equilibrium flow and for determining associated bounds on permissible perturbation amplitudes for remaining in this ROA.

%
%
Recent efforts for nonlinear stability analysis of the incompressible Navier-Stokes equations~(NSE) have exploited a Lur'e decomposition~\cite{khalil01} of the system dynamics into a feedback interconnection between the non-normal linear dynamics and quadratic energy-conserving nonlinearity.
%
Such approaches include dissipation inequalities~\cite{Ahmadi2018}, passivity analysis~\cite{zhao2013}, and sum-of-squares~(SOS) optimization~\cite{Goulart_2012}, all of which generalize the classical energy-based methods of hydrodynamic stability theory~\cite{Schmid2001,josephBook}.  
\akr{Methods for the analysis of systems with quadratic nonlinearities have also been proposed in prior works~\cite{AmatoAut2007,AmatoConf2007}; however, these methods scale combinatorially with the state dimension, prohibiting their use on high-dimensional fluids systems.}

Most recently, a series of studies have proposed \akr{exploiting} quadratic constraints~(QCs) between the inputs and outputs of the nonlinearity to conduct global and local stability analysis with reduced-complexity~\cite{kalurAIAA2020,kalur21,liu20}. 
%
%
The trade off for this computational expediency is a larger degree of conservatism in estimating the ROA and associated bounds on permissible perturbation amplitudes relative to more computationally demanding methods, such as SOS~\cite{Goulart_2012} and direct-adjoint looping~(DAL)~\cite{Kerswell_2018}. 
The QC formulation in~\cite{liu20} reduces conservatism compared to the approach in~\cite{kalur21}, \akr{but some conservatism remains because of a restriction to spherical sets.} 
%
%

In this work, we generalized the QCs presented in~\cite{kalurAIAA2020,kalur21} and~\cite{liu20} to arbitrary ellipsoidal sets. 
As we will show, these new QCs reduce conservatism and improve estimates of both the ROA and the largest permissible perturbation.
We propose algorithms for performing this analysis: one is an iterative algorithm that solves a semi-definite program at each iteration to refine the ROA estimate, and the other is based on solving a single generalized eigenvalue problem~(GEVP).
Using the QCs generalized on ellipsoidal sets, we analyze ROA estimates and the largest permissible perturbation for system stability; the inner estimate of the ROA captures this perturbation. As an example, we will demonstrate our approach on two low-dimensional mechanistic transitional flow models: the 4-state Walleffe-Kim-Hamilton~(WKH) model of shear flow~\cite{WKH} and the 9-state model of Couette flow~\cite{Moehlis_2004}. Finally, we measure the computational run-time and show that the proposed QC method obtains improved estimates over previous QC approaches~\cite{kalur21,liu20}, while reducing computational time over SOS and DAL methods.

\section{Problem Formulation}
\label{sec:problem}
Consider a nonlinear system of the following form:
\begin{equation}
  \label{eq:nlsys}
  \dot{x}(t) = A x(t) + N( x(t) )
\end{equation}
where $x(t) \in \R^n$ is the state and the state matrix
$A \in \R^{n\times n}$ is Hurwitz. The nonlinearity
$N: \R^n \rightarrow \R^n$ is assumed to be a quadratic function of
the form:
\begin{align}
  \label{eq:Ndef}
  N(x) = \bmtx x^T Q_1 x \\ \vdots \\ x^T Q_n x \emtx
\end{align}
where $Q_1,\ldots,Q_n \in \R^{n\times n}$ are symmetric (but not necessarily sign definite)  matrices. Moreover, the nonlinearity is assumed to be lossless:
$x^T N(x)=0$ $\forall$ $x\in \R^n$. This lossless property is observed in the nonlinear terms of the incompressible NSE and other reduced-order models that mimic transitional flows~\cite{Moehlis_2004,waleffe95}.


It also follows that $N(0)=0$. Hence $\bar{x}=0$ is
an equilibrium point of the nonlinear system \eqref{eq:nlsys}. This is
an asymptotically stable equilibrium point because $A$ is
Hurwitz~(see Theorem. 4.5 in \cite{khalil01}).  Let
$\phi(t,x(0))$ denote the solution of \eqref{eq:nlsys} at time $t$ from
the initial condition $x(0)$. The region of attraction (ROA) for
$\bar{x}=0$ is defined as:
\begin{align}
  \label{eq:ROA}
  \mathcal{R} := \{ x(0) \in \R^n \, : \, \phi(t,x(0)) \to 0 
          \mbox{ as } t\to \infty \}.
\end{align}
In other words, the ROA is the set of initial conditions for which the
trajectory asymptotically converges back to the equilibrium point. 
The
equilibrium point $\bar{x}=0$ is globally asymptotically stable if
$\mathcal{R} = \R^n$.  In general, the equilibrium point will be locally
but not globally asympotitically stable.  The objective is to obtain an 
inner estimate~$\mathcal{\hat{R}}$ of the ROA $\mathcal{R}$, i.e., to compute a set $\mathcal{\hat{R}}
\subset \mathcal{R}$.

\section{Stability Analysis}
\label{sec:stabanalysis}

The stability analysis is based on separating the nonlinearity from
the remaining linear dynamics:
\begin{align}
  \label{eq:lure}
  \dot{x}(t) & = A x(t) + z(t) \\
  z(t) & = N( x(t) ).
\end{align}
This system can be represented as the Lur'e decomposition \cite{khalil01} as shown in Figure~\ref{fig:lure}.

\begin{figure}[h]
\centering
\scalebox{0.9}{
\begin{picture}(120,90)(-20,0)
 \thicklines
 \put(0,0){\framebox(70,40){$\dot{x}=Ax+z$}}
 \put(15,50){\framebox(40,40){$N(x)$}}
 \put(-30,45){$x$}
 \put(0,20){\line(-1,0){20}}  
 \put(-20,20){\line(0,1){50}}  
 \put(-20,70){\vector(1,0){35}}  
 \put(94,45){$z$}
 \put(55,70){\line(1,0){35}}  
 \put(90,70){\line(0,-1){50}}  
 \put(90,20){\vector(-1,0){20}}  
\end{picture}
} 
\caption{Lur'e decomposition of nonlinear system.}
\label{fig:lure}
\end{figure}
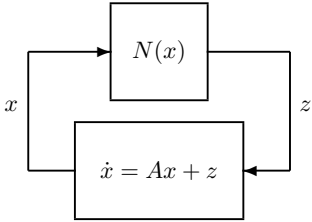

\subsection{Local Quadratic Constraints}
\label{sec:constraints}

The input-output properties of the nonlinearity can be bounded using a set of
QCs on $(x,z)$.  The lossless property yields the
following global QC:
\begin{align}
  \bmtx x \\ z \emtx^T \bmtx 0 & I \\ I & 0 \emtx
  \bmtx x \\ z \emtx = 0
  \,\,\, \forall x \in \R^n, z=N(x).
  \label{eq:Lossless_Constraint}
\end{align}
To study the effects of nonlinearities locally, additional QCs were formulated in \cite{kalur21} and in \cite{liu20}. The local QCs in both of these works were defined on a spherical set. 
In this work, we reduce the conservatism of the aforementioned approaches by generalizing to constraints on an ellipsoidal set. 
The next lemma generalizes the
result in \cite{liu20} to provide local constraints on an ellipsoidal set.

\begin{lemma}
\label{lem:localqc}
Let $E=E^T \succ 0$ be given and define the ellipsoid
$\mathcal{E}_\alpha := \{ x\in \R^n \, : \, x^T E x \le \alpha^2\}$.
The nonlinearity $N$ given in \eqref{eq:Ndef} satisfies
the following local QC for $i=1,\ldots,n$:
\begin{align}
  & \bmtx x \\ z \emtx^T \bmtx \alpha^2 (Q_i E^{-1} Q_i) & 0 
         \\ 0 & -e_i e_i^T \emtx
  \bmtx x \\ z \emtx \ge 0,~\forall x \in \mathcal{E}_\alpha, 
  \label{eq:Local_Constraints}
\end{align}  
where $e_i\in \R^n$ is the $i^{th}$ standard basis vector.
\end{lemma}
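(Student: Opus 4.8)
The plan is to reduce the matrix inequality \eqref{eq:Local_Constraints} to a scalar inequality and then apply the Cauchy--Schwarz inequality. First I would expand the quadratic form on the left-hand side of \eqref{eq:Local_Constraints}. Since $z = N(x)$ has $i$-th component $e_i^T z = x^T Q_i x$, the lower-right block contributes $-z^T e_i e_i^T z = -(e_i^T z)^2 = -(x^T Q_i x)^2$, while the upper-left block contributes $\alpha^2\, x^T Q_i E^{-1} Q_i x$. Hence \eqref{eq:Local_Constraints} is equivalent to the scalar statement
\begin{align}
  \alpha^2\, x^T Q_i E^{-1} Q_i x \ge (x^T Q_i x)^2
  \quad \forall x \in \mathcal{E}_\alpha .
  \label{eq:scalar_reduction}
\end{align}

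Next I would establish the key bound $(x^T Q_i x)^2 \le (x^T E x)(x^T Q_i E^{-1} Q_i x)$ for all $x \in \R^n$. Because $E = E^T \succ 0$, it admits a symmetric positive-definite square root $E^{1/2}$ with inverse $E^{-1/2}$, and I can write $x^T Q_i x = (E^{1/2} x)^T (E^{-1/2} Q_i x)$. Applying the Cauchy--Schwarz inequality to this pair of vectors gives
\begin{align}
  (x^T Q_i x)^2 \le \norm{E^{1/2} x}^2 \, \norm{E^{-1/2} Q_i x}^2
  = (x^T E x)\,(x^T Q_i E^{-1} Q_i x).
  \label{eq:cs_bound}
\end{align}

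Finally I would close the argument using the defining property of the ellipsoid. For $x \in \mathcal{E}_\alpha$ we have $x^T E x \le \alpha^2$, and $x^T Q_i E^{-1} Q_i x \ge 0$ since $E^{-1} \succ 0$, so multiplying the latter nonnegative quantity by $x^T E x \le \alpha^2$ only increases it. Combining with \eqref{eq:cs_bound} yields $(x^T Q_i x)^2 \le \alpha^2\, x^T Q_i E^{-1} Q_i x$, which is exactly \eqref{eq:scalar_reduction} and hence \eqref{eq:Local_Constraints}.

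I do not anticipate a genuine obstacle here; the result is a clean two-line consequence of Cauchy--Schwarz once the quadratic form is expanded. The only points requiring a little care are the bookkeeping that identifies $z^T e_i e_i^T z$ with $(x^T Q_i x)^2$ via $z = N(x)$, and the observation that the factor being scaled, $x^T Q_i E^{-1} Q_i x$, is nonnegative so that replacing $x^T E x$ by its upper bound $\alpha^2$ preserves the inequality direction. It is also worth remarking that the choice $E = I$ recovers the spherical-set QC of \cite{liu20}, confirming that this lemma is a strict generalization.
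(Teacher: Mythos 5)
Your proof is correct and follows essentially the same route as the paper: both reduce the matrix inequality to the scalar bound $z_i^2 \le (x^T E x)(x^T Q_i E^{-1} Q_i x)$ via Cauchy--Schwarz applied to the pair $E^{1/2}x$ and $E^{-1/2}Q_i x$ (the paper phrases this with $w = E^{1/2}x$ and $\hat{Q}_i = E^{-1/2}Q_i E^{-1/2}$, which yields the identical pair of vectors), then invokes $x^T E x \le \alpha^2$ on the ellipsoid. No substantive differences.
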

\begin{proof}
Note that $z^T e_i e_i^T z = z_i^2$, where $z_i:=x^T Q_i x$ is the $i^{th}$ entry of $z=N(x)$.  Define $w:=E^{\frac{1}{2}} x$ and $\hat{Q}_i:= E^{-\frac{1}{2}} Q_i E^{-\frac{1}{2}}$ so that $z_i = w^T \hat{Q}_i w$.
 The Cauchy-Schwartz inequality yields
  the following bound:
  \begin{align}
    \label{eq:zisq}
    z_i^2 \le \|w\|_2^2 \cdot \| \hat{Q}_i w\|_2^2.
  \end{align}
  Note that $\| \hat{Q}_i w\|_2^2 = x^T Q_i  E^{-1} Q_i x$. Moreover,
  if $x \in \mathcal{E}_\alpha$ then $\|w\|_2^2 = x^T E x \le \alpha^2$.
  Combining these facts with Eq.~\eqref{eq:zisq} yields
$z_i^2 \le x^T [\alpha^2 (Q_i E^{-1} Q_i)] x$
  for any
  $x \in \mathcal{E}_\alpha$.
\end{proof}
This result corresponds to Lemma 1 in \cite{liu20} for the special
case $E=I$. This special case corresponds to a local constraint on a
sphere of radius $\alpha$. The generalization to local constraints on
arbitrary ellipsoids will be used to improve our estimates of the ROA.

\subsection{ROA Estimation}
\label{sec:ROA}

We can combine Lyapunov theory with the local QCs from the previous section in order to compute an inner estimate $\hat{\mathcal{R}}$ for the ROA.  Roughly, we will define a Lyapunov candidate  $V(x)=x^T P x$ and use the QCs to show that $\dot{V}$ is negative definite along the trajectories of Eq.~\eqref{eq:nlsys} in a neighborhood of the equilibrium point $\bar{x}=0$.  The inner estimate of the ROA will be given by a sphere of radius $R$, denoted $\hat{\mathcal{R}}_R:=\{ x\in \R^n \, : \,  x^T x \le R^2\}$. The next theorem gives a matrix inequality condition to estimate the ROA using local QCs.  This is based on a standard Lyapunov result (Theorem 4.1 in [5]). To simplify notation, define the following matrices that appear in the QCs:

\begin{align}
\label{eq:constraint_matrix}
M_0 := \bmtx 0 & I \\ I & 0 \emtx,
\akr{M_i(\alpha,E)} := \bmtx \alpha^2  (Q_i E^{-1} Q_i) & 0 
\\ 0 & -e_i e_i^T \emtx. 
\end{align}

\begin{thm}
\label{thm:roaest}
Let $E=E^T \succ 0$, $\alpha>0$, $\epsilon>0$ be given. If
$\exists P=P^T \in \R^{n\times n}$, $R>0$, and $\xi_0,\ldots, \xi_n\in \R$
such that:
\begin{align}
\label{eq:roaLMI}
& \bmtx A^TP +P A & P \\ P & 0 \emtx + \xi_0 M_0 + \sum_{i=1}^n \xi_i \akr{M_i(\alpha,E)}
 \preceq \bmtx -\epsilon I & 0 \\ 0 & 0 \emtx  \\
\label{eq:roaPineq}
& \frac{1}{\alpha^2} E \preceq P \preceq \frac{1}{R^2} I \\
\label{eq:roaxi}
& \xi_i \ge 0 \mbox{ for } i=1,\ldots,n
\end{align}
then $\hat{\mathcal{R}}_R \subset \mathcal{R}$.
\end{thm}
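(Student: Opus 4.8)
The plan is to run the standard Lyapunov sublevel-set argument with the candidate $V(x) = x^T P x$, using the two-sided bound \eqref{eq:roaPineq} to sandwich a sublevel set of $V$ between the ball $\hat{\mathcal{R}}_R$ and the ellipsoid $\mathcal{E}_\alpha$ on which the local QC of Lemma~\ref{lem:localqc} is valid. First I would observe that $\frac{1}{\alpha^2} E \preceq P$ together with $E \succ 0$ forces $P \succ 0$, so $V$ is a legitimate Lyapunov candidate and the sublevel set $\Omega := \{x \in \R^n : x^T P x \le 1\}$ is a compact neighborhood of the origin. The same inequality gives $x^T E x \le \alpha^2\, x^T P x \le \alpha^2$ for every $x \in \Omega$, i.e.\ $\Omega \subseteq \mathcal{E}_\alpha$, so the QCs \eqref{eq:Local_Constraints} hold at every point of $\Omega$.

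Next I would differentiate $V$ along \eqref{eq:nlsys}. Writing $z = N(x)$, one has $\dot V(x) = 2 x^T P(Ax + z) = \bmtx x \\ z \emtx^T \bmtx A^T P + P A & P \\ P & 0 \emtx \bmtx x \\ z \emtx$. Pre- and post-multiplying \eqref{eq:roaLMI} by $\bmtx x \\ z \emtx^T$ and $\bmtx x \\ z \emtx$ gives
\begin{equation}
\dot V(x) + \xi_0 \bmtx x \\ z \emtx^T \! M_0 \bmtx x \\ z \emtx + \sum_{i=1}^n \xi_i \bmtx x \\ z \emtx^T \! M_i(\alpha,E) \bmtx x \\ z \emtx \le -\epsilon\, x^T x .
\end{equation}
For $x \in \Omega$ the $M_0$-term equals $2 x^T N(x) = 0$ by the lossless property --- which is why $\xi_0$ needs no sign restriction --- and each $M_i(\alpha,E)$-term is nonnegative by Lemma~\ref{lem:localqc}, while $\xi_i \ge 0$ by \eqref{eq:roaxi}. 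Dropping these nonnegative contributions leaves $\dot V(x) \le -\epsilon\, x^T x$, which is strictly negative on $\Omega \setminus \{0\}$.

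To conclude, I would invoke the standard Lyapunov argument (Theorem~4.1 in \cite{khalil01}): $V$ is positive definite, the compact set $\Omega$ is forward invariant since $\dot V < 0$ on $\Omega \setminus \{0\}$, and $\dot V$ is negative definite there, so every trajectory starting in $\Omega$ converges to the origin; hence $\Omega \subseteq \mathcal{R}$. Finally $P \preceq \frac{1}{R^2} I$ gives $x^T P x \le \frac{1}{R^2} x^T x \le 1$ whenever $x^T x \le R^2$, so $\hat{\mathcal{R}}_R \subseteq \Omega \subseteq \mathcal{R}$.

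The only real subtlety --- more bookkeeping than obstacle --- is the restricted validity of the local QC: Lemma~\ref{lem:localqc} only controls the nonlinearity on $\mathcal{E}_\alpha$, so the invariance argument must be confined to a sublevel set of $V$ lying inside $\mathcal{E}_\alpha$, and it is exactly the two-sided inequality \eqref{eq:roaPineq} that simultaneously delivers $\Omega \subseteq \mathcal{E}_\alpha$ and $\hat{\mathcal{R}}_R \subseteq \Omega$. Everything else is the textbook Lyapunov / S-procedure computation.
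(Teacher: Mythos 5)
Your proof is correct and follows essentially the same route as the paper's: the same Lyapunov candidate $V(x)=x^TPx$, the same left/right multiplication of \eqref{eq:roaLMI} by $(x,z)$, the same use of losslessness to kill the $\xi_0 M_0$ term and of Lemma~\ref{lem:localqc} with $\xi_i\ge 0$ to drop the $M_i$ terms on the sublevel set contained in $\mathcal{E}_\alpha$, and the same two set containments from \eqref{eq:roaPineq}. No substantive differences to report.
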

\begin{proof}
  Define the Lyapunov function $V(x):=x^T P x$. Note that
  $\frac{1}{\alpha^2} E \preceq P$ implies $P \succ 0$. Multiply
  \eqref{eq:roaLMI} on the left/right by $\bmtx x(t)^T & z(t)^T \emtx$ and
  its transpose to obtain:
  \begin{align*}
    & \frac{d}{dt} V( x(t) ) 
    + \xi_0 \bmtx x(t) \\ z(t) \emtx^T M_0 \bmtx x(t) \\ z(t) \emtx \\
    & + \sum_{i=1}^n \xi_i \bmtx x(t) \\ z(t) \emtx^T 
           M_i \bmtx x(t) \\ z(t) \emtx
    \le -\epsilon \| x(t) \|_2^2.
  \end{align*}
  The second term with $\xi_0$ and $M_0$ is equal to zero due to the
  global lossless property of $N(x)$. \akr{Here, the scalar term $\xi_0$ can be either positive or negative. While }the quadratic terms with $\xi_i$
  and $M_i$ ($i=1$ to $n$) are each non-negative for any $x(t)\in \mathcal{E}_\alpha$ by
  Lemma~\ref{lem:localqc} and $\xi_i\ge 0$. Thus
  $x(t)\in \mathcal{E}_\alpha$ implies
  $\frac{d}{dt} V( x(t) ) \le -\epsilon \| x(t) \|_2^2$.

  The constraint $\frac{1}{\alpha^2} E \preceq P$ implies that if
  $V(x) \le 1$ then $x^T E x \le \alpha^2$, i.e.,
  $\{ x\in \R^n \, : \, V(x) \le 1 \} \subset \mathcal{E}_\alpha$.
  Hence $\bar{x}=0$ is locally asymptotically stable and the level
  set $\{ x\in \R^n \, : \, V(x) \le 1 \}$ is contained in the ROA $\mathcal{R}$
  (\ak{Theorem 4.1 in \cite{khalil01}}).  Finally, the constraint
  $P \preceq \frac{1}{R^2} I$ implies that if $x^T x \le R^2$, then
  $V(x) \le 1$.  This yields the desired set containment:
  \begin{align*}
    \hat{\mathcal{R}}_R \subset 
    \{ x\in \R^n \, : \, V(x) \le 1 \} \subset  \mathcal{R}.
  \end{align*}
\end{proof}

This theorem provides an inner estimate of the ROA characterized by a
sphere of radius $R$.  A convex optimization can be used to compute
the largest feasible $R$ for given values of
$(E,\alpha,\epsilon)$. Define $\lambda:=\frac{1}{R^2}$ and note that
maximizing $R$ is equivalent to minimizing $\lambda$.  {\tm{Equations
\eqref{eq:roaLMI}}}-\eqref{eq:roaxi} are linear matrix inequalities
(LMIs) in variables $(P,\xi,\lambda)$. The following optimization
is a semidefinite program (SDP):
\begin{align}
  \label{eq:roaSDP}
  \lambda^*:= & \min_{P,\xi,\lambda} \lambda
   \mbox{ subject to } \tm{\eqref{eq:roaLMI}}-\eqref{eq:roaxi}.
\end{align}
An SDP is convex and the global optimum $\lambda^*$ can be computed
efficiently using freely available solvers~\cite{Boyd1994,Boyd2004}. The radius
$R^*=\frac{1}{\sqrt{\lambda^*}}$ provides the largest spherical inner estimate of the ROA for the given local QC region $(E,\alpha)$ and $\epsilon>0$.  The parameter $\epsilon>0$ is chosen to be a
``small'' positive number to ensure $\dot{V}<0$. This term can be dropped if Eq.~\eqref{eq:roaLMI} is
feasible with a strict inequality.  

The main issue with this numerical method is that it requires the
choice of the local QC region in terms of the ellipsoidal shape $E$
and size $\alpha$. If $E=I$, then the QCs are enforced on a sphere of
radius $\alpha$ as shown in Figure~\ref{fig:level_set_diagram}.  A
small value of $\alpha$ will restrict the size of both the Lyapunov
function level set and the spherical ROA inner estimate.  On the other
hand, a large value of $\alpha$ may cause the SDP to be infeasible.
This occurs because the QC bounds on $N(x)$ become more conservative
(less tight) for larger local regions.  A one-dimensional line search
can be used to compute the best value of $\alpha$ for a given local
ellipsoid shape $E$. For example, the SDP in Eq.~\eqref{eq:roaSDP} can
be solved with $E=I$ on a grid of values
$\{\alpha_1,\ldots,\alpha_f\}$.  Each solution yields an \ak{inner}
ROA estimate with radius $R^*(\alpha_i)$. The best $\alpha_i$ is the
one that yields the largest \ak{inner} ROA estimate:
$\max_i R^*(\alpha_i)$.

We can further improve on this \ak{inner} ROA estimate by exploiting
the shape of the ellipsoid as specified by $E$. Unfortunately
Equations \tm{\eqref{eq:roaLMI}}-\eqref{eq:roaxi} are non-convex in
$(P,\xi,R, E,\alpha)$. The first approach, denoted Algorithm A, 
iteratively updates the ellipsoid shape based on the Lyapunov function
obtained from the previous iterate.

\vspace{0.1in}
\noindent{\textbf{Algorithm A:}
\begin{enumerate}
\item \textbf{Initial Estimate:} Define \akr{$M_i(\alpha,E)$} using $E^{(1)}=I$. Find the best $\alpha^{(1)}$ for the given local QCs $E^{(1)}$.  Let ($P^{(1)}, \xi^{(1)}, R^{(1)}$) be the corresponding solutions of the SDP with ($E^{(1)},\alpha^{(1)}$).

\item \textbf{Refinement:} Align the local QC set with the Lyapunov function solution: $E^{(2)} = P^{(1)}$. Find the best $\alpha^{(2)}$ for the updated local QCs $E^{(2)}$.  Let ($P^{(2)}, \xi^{(2)}, R^{(2)}$)  be the corresponding solutions of the SDP with ($E^{(2)},\alpha^{(2)}$)

\item \textbf{Iterate:} Repeat the refinement step with $E^{(i+1)} = P^{(i)}$ to yield ($\alpha^{(i)}, P^{(i)}, \xi^{(i)}, R^{(i)}$). This can be performed a fixed number of iterations or until the radius  $R^{(i)}$ converges.
\end{enumerate}

\begin{figure}[!htb]
    \centering
    \includegraphics[scale = 0.25]{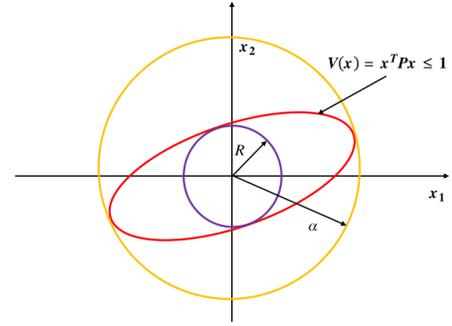}
    \caption{ 2-D visualization of a spherical local region for the QCs corresponding to $E=I$ and $\alpha>0$  (yellow), Lyapunov function level set $\{ x \in \mathbb{R}^n : x^T P x \le 1\}$ (red), and ROA inner estimate $\mathcal{R}_R$ (purple). }
    \label{fig:level_set_diagram}
\end{figure}

}
The optimal solutions
from the first step $(P^{(1)},\xi^{(1)},R^{(1)})$ are also feasible for the second step when
$\alpha^{(2)}=1$. The reason is that the constraint $\frac{1}{\alpha^2} E \le P$ in Eq.~\eqref{eq:roaPineq} holds with equality when using $(P,E,\alpha) = (P^{(1)},P^{(1)},1)$.  Hence the \tm{inner estimate of} ROA cannot shrink at
the second step:  $R^{(2)} \ge R^{(1)}$. Repeating this process
gives a monotonically non-decreasing sequence of spherical inner
estimates for the ROA:  $R^{(i+1)} \ge R^{(i)}$. 
Note that each step of the iterative method has roughly the same
computational cost as the first step. We have to solve one SDP for
each value of $\alpha_i$.

The second approach, denoted Algorithm B below, effectively performs
only a single refinement of the local shape parameter $E$. This
restriction allows the single refinement step to be formulated as a
generalized eigenvalue problem~(GEVP)~(see Eq.~(19) in
\cite{kalur21}).  This will typically reduce the
computational cost, but possibly yield more conservative results
(smaller estimates for $\hat{\mathcal{R}}_R$) as compared to Algorithm~A.  To formulate Algorithm~B, first decompose the quadratic constraint
matrix in Eq.~\eqref{eq:constraint_matrix} into two matrices as
follows:
\begin{align}
    \akr{M_i(\alpha,E)} = \alpha^2 \underbrace{\begin{bmatrix}
    Q_iE^{-1}Q_i & \mathbf{0} \\ \mathbf{0} & \mathbf{0}
    \end{bmatrix}}_{M^{\akr{E}}_i} + \underbrace{\begin{bmatrix} \mathbf{0} & \mathbf{0}\\ \mathbf{0} & -e_ie_i^T \end{bmatrix}}_{{M}^{\akr{e}}_i}.
    \label{eq:Mdecomposition}
\end{align} 

Algorithm B fixes both the shape $E=P^{(1)}$ and Lyapunov function
$P=P^{(1)}$.  This aligns both the local QC ellipsoid shape $E$ with
the level sets of the Lyapunov function.  The local regions for both
are parameterized as
$\{ x\in \R^n \, : \, x^T P^{(1)} x \le \alpha \}$. A sub-problem is
to find the largest local region $\alpha$ over which the local
quadratic constraints are valid and $\dot{V}(x(t))<0$. This is 
formulated by the following optimization:
\begin{align}
\begin{split}
&\min_{\gamma,\xi_0,\ldots,\xi_n} \quad \gamma\\
&\text{subject to}~\xi_{i} \ge 0 \qquad (\text{for}~i=1~\text{to}~n)\\
& \begin{bmatrix} A^TP + PA & P\\
P & \mathbf{0}\end{bmatrix} + \xi_{0}M_0 + \sum_{i=1}^n \xi_{i}{M}^{\akr{e}}_i \prec \tm{\gamma} \sum_{i=1}^n \xi_{i} {M}^{\akr{E}}_i,
\end{split}
\label{eq:GEVP_ROA}
\end{align}
where $\gamma =-\alpha^2$ and $\xi_{i}$~($i=0~\text{to}~n$) are
Lagrange multipliers for the global and local constraints
respectively. It is emphasized that $P=P^{(1)}$ is fixed and not a
decision variable in the optimization.  This is a GEVP~\cite{Boyd1993}
in variables $\alpha^2$, $\xi_0,\ldots,\xi_n$. This one GEVP gives the
largest level set $\alpha^*$ defined by $P=P^{(1)}$ over
which the local quadratic constraints are valid and $\dot{V}(x(t))<0$.
Let $\lambda_{max}(P^{(1)})$ denote the largest eigenvalue of
$P^{(1)}$.  Note that the sphere $\hat{\mathcal{R}}_R$ is contained in
\akr{$\{ x\in \R^n \, : \, x^T P^{(1)} x \le \alpha^{*2} \}$} if and only if
$R \le \frac{\alpha^*}{\sqrt{ \lambda_{max}(P^{(1)})}}$. Thus we can
directly compute the largest radius of the inner ROA estimate
$\hat{\mathcal{R}}_R$ from the optimal
$\alpha^*$.  This leads to our second method to estimate the ROA.

\vspace{0.1in}
\noindent{\textbf{Algorithm B:}}
\begin{enumerate}
\item \textbf{Initial Estimate:} Define \akr{$M_i(\alpha,E)$} using $E^{(1)}=I$. Find the best $\alpha^{(1)}$ for the given local QCs $E^{(1)}$.  Let ($P^{(1)}, \xi^{(1)}, R^{(1)}$) be the corresponding solutions of the SDP with ($E^{(1)},\alpha^{(1)}$).
\item \textbf{Maximize Level Set:} Fix $P=E=P^{(1)}$ and solve the GEVP in Eq.~\eqref{eq:GEVP_ROA} to obtain the maximal level set $\alpha^*$.
\item \textbf{Maximize ROA Inner Estimate:} Select $R^*=\frac{\alpha^*}{\sqrt{ \lambda_{max}(P^{(1)})}}$.
\end{enumerate}
\akr{As a test, we apply algorithm A on the 2-D example in \cite{AmatoConf2007}. We obtain
an inner approximation for the ROA of $R^* = 2.6877$ while
\cite{AmatoConf2007} reports a box of $[-1,1] \times [-2,2]$.  Our disk and the box
have areas of $22.69$ and $8.00$, respectively.}

\section{Numerical Example}
\label{sec:NumericalExample}

We evaluate the proposed analysis methods on two low-order mechanistic
models of transitional flows that were used to demonstrate the QC analysis method in~\cite{liu20}: the 4-state Waleffe-Kim-Hamilton~(WKH)
model~\cite{waleffe95} and the 9-state reduced-order model of \akr{a plane} Couette flow~\cite{Moehlis_2004}.
Both models have the form in Eq.~\eqref{eq:nlsys}, with non-normal linear dynamics and a quadratic lossless nonlinearity.
We note that the linear dynamics' matrix is parameterized by the Reynolds number $Re$: i.e.,~$A=A(Re)$.
Additional details on the specific models used here can be found in~\cite{liu20}.

We begin by using the GEVP in Eq.~\eqref{eq:GEVP_ROA} to estimate the size $\alpha^*$ of the ROA over a range of $Re$.
This is done by applying steps 1 and 2 of Algorithm~B.
%
Figures~\ref{fig:WKH_RoA} and \ref{fig:Couette_RoA} show the results of this analysis (light blue) for the WKH and 9-state Couette flow models, respectively.
These results are compared against ROA estimates based on the quadratic constraints proposed in Liu and Gayme \cite{liu20} (green) and those proposed in Kalur, Seiler, and Hemati \cite{kalur21} (red).
This comparison indicates that the ROA estimate based on refinement of the local QC region in steps 1 and 2 of Algorithm B leads to less conservative estimates on~$\alpha^*$.
Of note here is that although the Liu \& Gayme analysis reduces the conservatism in the analysis relative to the Kalur, Seiler, \& Hemati analysis, the formulation based on ellipsoidal sets reduces conservatism relative to both of these methods by a substantially larger degree for both mechanistic models. 

\begin{figure}[!htb]
    \centering
    \subfigure[WKH Model]{\includegraphics[scale=0.425]{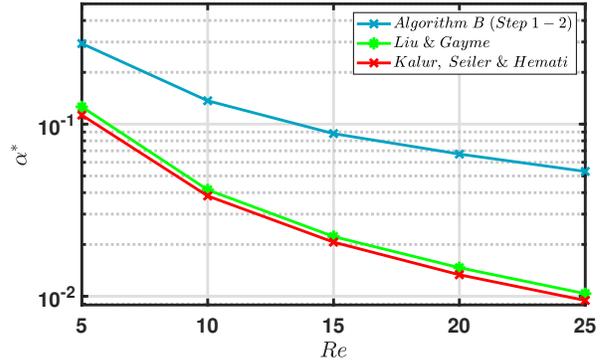}\label{fig:WKH_RoA}}\\
   \subfigure[$9$-state Couette Flow Model]{\includegraphics[scale=0.425]{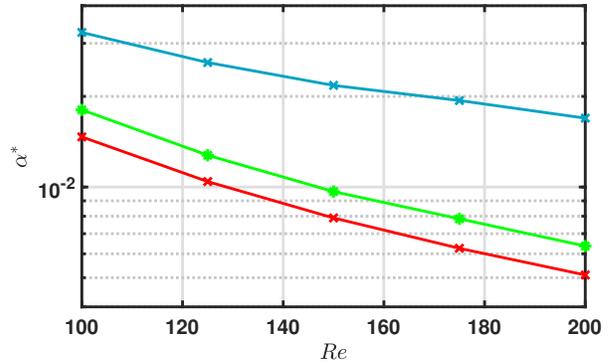}\label{fig:Couette_RoA}}
    \caption{\ak{The ellipsoidal constraints using step 1 and 2 of algorithm B shows significant improvement in region of attraction~(ROA) estimates.}}
    \label{fig:RoA}
\end{figure}




Next, we apply Algorithm A and Algorithm B to estimate the inner approximation~$R^*$ as a function of~$Re$.
This analysis is equivalent to computing a bound on the permissible perturbation amplitude, or sphere of ``safe'' initial conditions.
In Figure~\ref{fig:SOS_R0}, the radius of the largest $\hat{\mathcal{R}}_R$ is denoted as $R^*$  and is obtained from solving Algorithm A and compared with SOS and DAL estimates for the WKH and 9-state Couette flow models, respectively. 
\akr{The DAL method solves a variational problem for the nonlinear optimal perturbation, which is used as a benchmark for comparison.}
\akr{The SOS analysis uses the toolbox available in~\cite{SOSOPT}.}
%
%
To solve Eq.~\eqref{eq:roaSDP} for the WKH model and 9-state models, we use 200 logarithmically spaced values of $\alpha$ between $10^{-5}$ and $10^1$. We compute the ROA estimate using the largest radius obtained on this grid, i.e., $R^*:=\max_i R(\alpha_i^*)$. The results in  Figure~\ref{fig:SOS_R0} show that the ellipsoidal sets improve the estimates of $\hat{\mathcal{R}}_R$ compared to the spherical sets given in~\cite{kalur21,liu20}.
This is true even at the initial iterate, which yields improvements of approximately $4$~times and $2.5$~times for the WKH and 9-state models, respectively.
Additional refinement iterations improve the results even further\akr{. However we set the tolerance for convergence to $10^{-4}$, and also observe only a} marginal improvement after three iterations of Algorithm A~(gray curve).
For the 9-state model, there is an improvement factor of roughly 2.4 and 3.3 using Algorithm A (blue curve) over the QC methods of Liu \& Gayme and Kalur, Seiler \& Hemati.
Additionally, the improvement factor of $R^*$ is  $\approx 3.38$ using Algorithm A as compared to the other two QC constraints for the WKH model.

\begin{figure}[h]
    \centering
        \subfigure[WKH Model]{\includegraphics[scale=0.425]{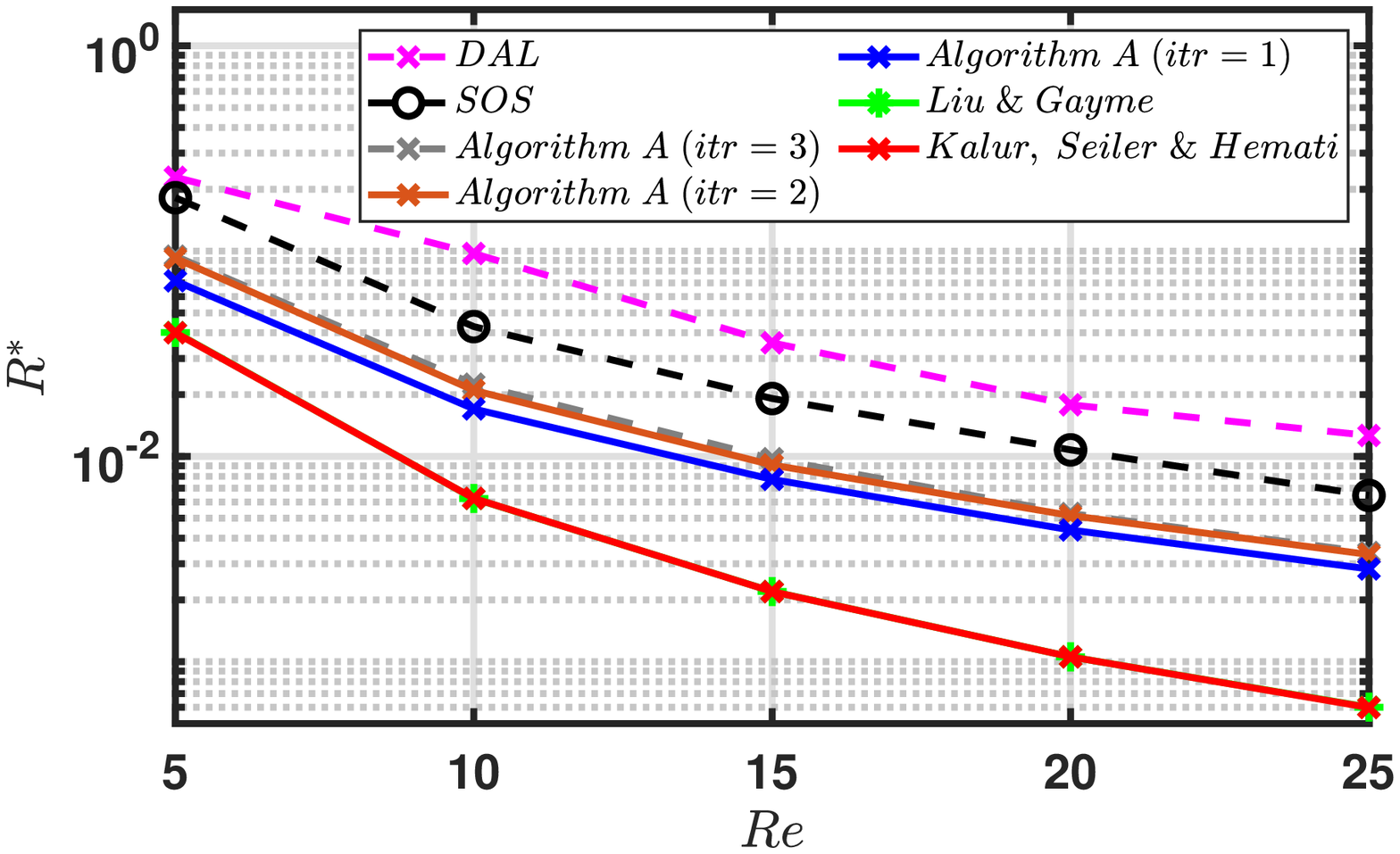}\label{fig:SOS_R0_WKH}}
    \subfigure[$9$-state Couette Flow Model]{\includegraphics[scale=0.425]{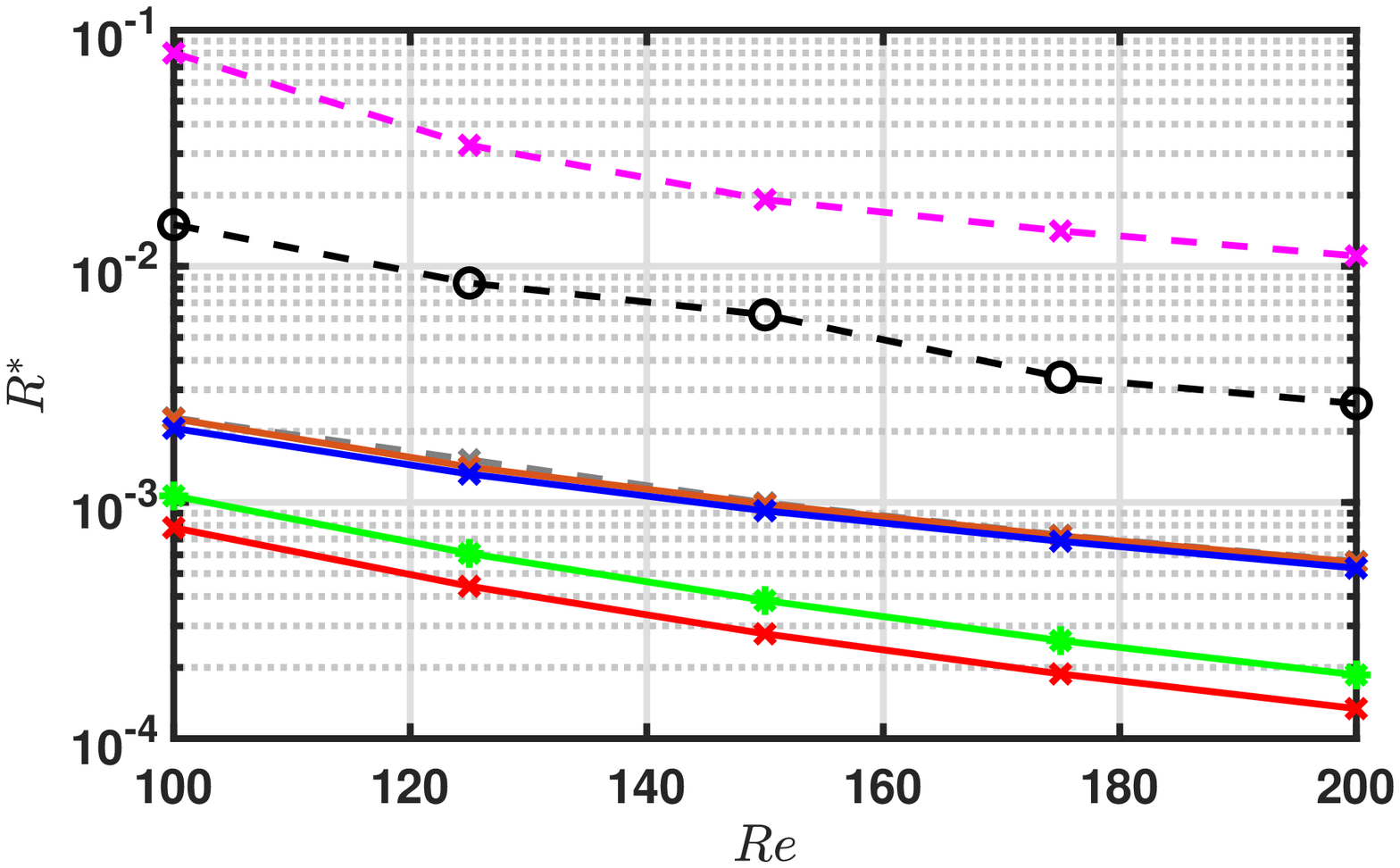}\label{fig:SOS_R0_Couette}}
    \caption{\akr{The inner estimates of ROA obtained using Algorithm~A show improved estimates compared to methods based on spherical sets.}}
    \label{fig:SOS_R0}
\end{figure}

In Figure~\ref{fig:SOS_R0}, we also compare the results obtained using Algorithm A with SOS and DAL methods.
We find that each iteration of Algorithm A reduces the conservatism of the QC estimates, but the inner estimate is still conservative relative to the SOS and DAL methods. 
More specifically, the largest radius $R^*$ obtained from the SOS~(black dashed curve) method and Algorithm A with 1 iteration~(blue curve) differ by an average factor of $\approx 2.45$ and $\approx 6.1$ for the WKH and 9-state models, respectively. 
The differences in the $R^*$ estimates become even greater for the DAL approach, with the DAL estimates (magenta curve) being larger by a factor of $\approx 3.5$ and $\approx 23$ than the Algorithm A estimates for the WKH and 9-state models, respectively.
%
We note that SOS and DAL methods provide superior estimates of $R^*$ because both of these methods use precise information of the nonlinearity and exact equations of motion. 
%
This is in contrast to the QC-based approaches, whereby only input-output properties of the nonlinear terms are used.
%

Next, we assess estimates of $R^*$ using Algorithm~B (see Figure~\ref{fig:Couette_gevp}).
The second step in Algorithm~B avoids the computationally demanding step of solving over a grid of $\alpha$, as is required in Algorithm~A.
Instead, Algorithm~B directly determines the best $\alpha$ for the given shape $E$ and Lyapunov energy matrix $P$, and thus provides an efficient ``one-shot'' approach to estimate $R^*$.
%
Since Algorithm~B does not facilitate further iterations, in general it provides conservative results as compared to Algorithm~A.
However, Algorithm~B substantially reduces conservatism to prior formulations of the QC analysis presented in~\cite{kalur21,liu20}.
In Figure \ref{fig:Couette_gevp}, it can be seen that estimates from Algorithm A and B differ by a factor of roughly 1.16---on average---for the 9-state Couette flow model.
Although not reported here, we made similar observations in our analysis of the WKH model, where the difference was roughly a factor of $1.06$ between Algorithm A and B estimates of $R^*$.
\begin{figure}[h]
    \centering
\includegraphics[scale=0.425]{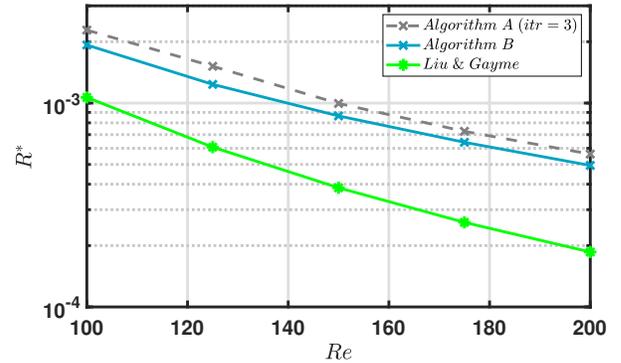}
    \caption{{\akr{The inner estimate of ROA obtained using Algorithm B for the 9-state model is conservative compared to the refinement using Algorithm A. 
    }}}
   \label{fig:Couette_gevp}
\end{figure}


\ak{
\begin{table*}[t]
\caption{Total run-time and average solver time per iteration for calculating $R^*$ by the various methods studied.} 
    \begin{tabular}{|c|c|c||c|c|}
    \hline
    \multirow{2}{*}{Method}  & \multicolumn{2}{|c||}{Run-time \akr{to convergence}~(secs)} & \multicolumn{2}{c|}{Avg. solver time \akr{for one} iteration~(secs)}\\ \cline{2-5}
    & WKH & 9-state & WKH & 9-state  \\\cline{2-5}
   \hline
  Algorithm A \akr{(solver: Mincx Matlab)} &  2.89 &  63.48 &  0.76$\times 10^{-2}$ &  0.15   \\
   \hline
  Algorithm B \akr{(solver Mincx \& gevp~Matlab)} & 1.85 & 37.99 &  2.3$\times 10^{-2}$  & 0.49 \\
  \hline
   DAL & 9.82 & 137.6 &  8.64 $\times 10^{-4}$ & 0.18 $\times 10^{-2}$ \\
   \hline
   SOS & 116.8 & 5.82$\times10^4$ & 2.9  & 1.45$\times10^3$\\ \hline
    \end{tabular}
    \label{table:1}
\end{table*}
}
Finally, we assess the computational run-time performance of the various methods investigated in this study.
All computations were performed on an ASUS ROG M15 laptop with Intel 2.6 GHz i7-10750H CPU and a 16 GB RAM.
Overall both Algorithm A and B proposed in this paper require less total run-time compared to DAL and SOS methods.
This savings becomes especially apparent in analyzing the 9-state model.
Although the SOS and DAL methods yield more accurate estimates, these methods scale poorly with the state dimension compared to the QC analysis methods.
The SOS method for WKH has a wall-time of about $116.8$ seconds as compared to $5.82\times10^4$ seconds for the 9-state model. Similarly, the solver run times for each iteration of WKH model is $2.9$ seconds compared to $1.45\times10^3$ seconds for the 9-state model. Thus, in case of the SOS method, roughly doubling the states results in the  total computation time increasing by a factor of $\approx 500$. 
In contrast, the total run-time of the QC-based Algorithm A increased by a factor of roughly $20$ between the 4-state WKH model and the 9-state model.  For the 9-state model, when we compare the run-time for Algorithm A to the SOS method, we see that the QC method is approx $900$ times faster. 
%
%
We note that the run-time for the DAL method appears to increase by a factor of roughly 15 when going from the 4-state WKH model to the 9-state model, which actually seems to scale better than even the QC method;
%
%
however, it is important to note that the DAL method can be sensitive to the final simulation time, perturbation size, tolerances, etc. Thus, tuning the DAL method can be a time intensive process, especially when system parameters (e.g.,~$Re$) are changed.
The time required to tune the DAL process to obtain the precise estimates reported in this study is not reflected in the times listed in Table~\ref{table:1}.
%
Overall, we conclude that the QC-based Algorithms A and B require less end-to-end time than SOS and DAL methods, and yield $R^*$ solutions that are approximately within one order of magnitude of the SOS and DAL estimates.

\section{Conclusions}
In this work, we have proposed an improvement to  the quadratic constraint (QC) framework
for nonlinear fluid flow analysis.
This was done by generalizing the local QCs from spherical sets~\akr{(proposed in \cite{kalur21,kalurAIAA2020,liu20})} to ellipsoidal sets, which reduced conservatism and improved estimates of the ROA. 
Additionally, we proposed and investigated two algorithms for performing the ROA analysis. 
The less conservative but more computationally demanding algorithm---Algorithm A---iteratively refines the solution by solving a sequence of semi-definite programs.
In contrast, the more computationally efficient algorithm---Algorithm B---solves a single generalized eigenvalue problem~(GEVP) and yields estimates of the ROA and permissible perturbation amplitude in a single pass.
Both Algorithms A and B were found to outperform the  QC analysis methods proposed in~\cite{kalur21} and \cite{liu20} in terms of accuracy.
Algorithm~B did so at no additional computational cost over these prior QC-based analysis methods.
Both of the proposed algorithms surpassed prevailing SOS and DAL methods in terms of computational run-time.
Although the proposed methods did not attain the same degree of accuracy as the computationally demanding SOS and DAL methods, both Algorithms A and B estimated results on the same order of magnitude as DAL and SOS for the models considered here.
It may still be possible to refine the QC method beyond what we have presented in this study.
Future work may benefit from incorporating additional constraints to refine the proposed QC analysis even further.
\section*{Acknowledgments}

This material is based upon work supported by the Army Research Office under grant number W911NF-20-1-0156. 
%
MSH acknowledges support from the Air Force Office of Scientific Research under award number FA 9550-19-1-0034 
and the National Science Foundation under grant
number CBET-1943988.

\bibliographystyle{IEEEtran}
\bibliography{ROAEstimationForFlows}

\begin{thebibliography}{10}
\providecommand{\url}[1]{#1}
\csname url@samestyle\endcsname
\providecommand{\newblock}{\relax}
\providecommand{\bibinfo}[2]{#2}
\providecommand{\BIBentrySTDinterwordspacing}{\spaceskip=0pt\relax}
\providecommand{\BIBentryALTinterwordstretchfactor}{4}
\providecommand{\BIBentryALTinterwordspacing}{\spaceskip=\fontdimen2\font plus
\BIBentryALTinterwordstretchfactor\fontdimen3\font minus
  \fontdimen4\font\relax}
\providecommand{\BIBforeignlanguage}[2]{{%
\expandafter\ifx\csname l@#1\endcsname\relax
\typeout{** WARNING: IEEEtran.bst: No hyphenation pattern has been}%
\typeout{** loaded for the language `#1'. Using the pattern for}%
\typeout{** the default language instead.}%
\else
\language=\csname l@#1\endcsname
\fi
#2}}
\providecommand{\BIBdecl}{\relax}
\BIBdecl

\bibitem{Schmid2001}
P.~J. Schmid and D.~S. Henningson, \emph{{Stability and transition in shear
  flows}}.\hskip 1em plus 0.5em minus 0.4em\relax Springer, 2001.

\bibitem{barkleyJFM2016}
D.~Barkley, ``{Theoretical perspective on the route to turbulence in a pipe},''
  \emph{Journal of Fluid Mechanics}, vol. 803, p.~P1, 2016.

\bibitem{Kerswell_2018}
R.~Kerswell, ``Nonlinear nonmodal stability theory,'' \emph{Annual Review of
  Fluid Mechanics}, vol.~50, no.~1, pp. 319--345, 2018.

\bibitem{khalil01}
H.~K. Khalil, \emph{{Nonlinear systems; 3rd ed.}}\hskip 1em plus 0.5em minus
  0.4em\relax Prentice-Hall, 2002.

\bibitem{Ahmadi2018}
M.~Ahmadi, G.~Valmorbida, D.~Gayme, and A.~Papachristodoulou, ``A framework for
  input–output analysis of wall-bounded shear flows,'' \emph{Journal of Fluid
  Mechanics}, vol. 873, pp. 742--785, 2019.

\bibitem{zhao2013}
S.~{Zhao} and S.~{Duncan}, ``Passivity of plane poiseuille flow,'' in
  \emph{2013 European Control Conference (ECC)}, 2013, pp. 1077--1082.

\bibitem{Goulart_2012}
P.~J. Goulart and S.~Chernyshenko, ``Global stability analysis of fluid flows
  using sum-of-squares,'' \emph{Physica D: Nonlinear Phenomena}, vol. 241,
  no.~6, p. 692–704, Mar 2012.

\bibitem{josephBook}
D.~D. Joseph, \emph{Stability of fluid motions {I}}.\hskip 1em plus 0.5em minus
  0.4em\relax New York: Springer-Verlag, 1976.

\bibitem{AmatoAut2007}
F.~Amato, C.~Cosentino, and A.~Merola, ``On the region of attraction for
  nonlinear quadratic systems,'' \emph{Automatica}, vol.~43, p. 2119, 2007.

\bibitem{AmatoConf2007}
------, ``On the region of attraction for nonlinear quadratic systems,''
  \emph{Proceedings of the 14th Mediterranean Conference on Control and
  Automation}, vol.~1, 2006.

\bibitem{kalurAIAA2020}
A.~Kalur, P.~Seiler, and M.~Hemati, ``Stability and performance analysis of
  nonlinear and non-normal systems using quadratic constraints,'' \emph{AIAA
  Aerospace Sciences Meeting, AIAA Paper 2020-0833}, Jan 2020.

\bibitem{kalur21}
------, ``Nonlinear stability analysis of transitional flows using quadratic
  constraints,'' \emph{Physical Review Fluids}, vol.~6, p. 044401, 2021.

\bibitem{liu20}
C.~Liu and D.~F. Gayme, ``Input-output inspired method for permissible
  perturbation amplitude of transitional wall-bounded shear flows,''
  \emph{Phys. Rev. E}, vol. 102, p. 063108, Dec 2020.

\bibitem{WKH}
F.~Waleffe, J.~Kim, and J.~M. Hamilton, ``On the origin of streaks in turbulent
  shear flows,'' in \emph{Turbulent Shear Flows 8}.\hskip 1em plus 0.5em minus
  0.4em\relax Springer Berlin Heidelberg, 1993, pp. 37--49.

\bibitem{Moehlis_2004}
J.~Moehlis, H.~Faisst, and B.~Eckhardt, ``A low-dimensional model for turbulent
  shear flows,'' \emph{New Journal of Physics}, vol.~6, pp. 56--56, May 2004.

\bibitem{waleffe95}
F.~Waleffe, ``Transition in shear flows. nonlinear normality versus non-normal
  linearity,'' \emph{Physics of Fluids}, vol.~7, no.~12, pp. 3060--3066, 1995.

\bibitem{Boyd1994}
S.~Boyd, L.~{El Ghaoui}, E.~Feron, and V.~Balakrishnan, \emph{{Linear Matrix
  Inequalities in System and Control Theory}}, 1994, vol.~15.

\bibitem{Boyd2004}
S.~Boyd and L.~Vandenberghe, \emph{{Convex Optimization}}.\hskip 1em plus 0.5em
  minus 0.4em\relax Cambridge University Press, 2004.

\bibitem{Boyd1993}
S.~Boyd and L.~E. Ghaoui, ``{Method of centers for minimizing generalized
  eigenvalues},'' \emph{Linear Algebra and Applications, special issue on
  Linear Algebra in Systems and Control}, 1993.

\bibitem{SOSOPT}
\BIBentryALTinterwordspacing
G.~Balas, A.~Packard, P.~Seiler, and U.~Topcu, ``Robustness analysis for
  nonlinear systems.'' [Online]. Available:
  \url{https://dept.aem.umn.edu/~AerospaceControl/ }
\BIBentrySTDinterwordspacing

\end{thebibliography}

\end{document}